\pgfplotsset{compat=newest}
\def\endthebibliography{%
	\def\@noitemerr{\@latex@warning{Empty `thebibliography' environment}}%
	\endlist
}
\newcommand{\abs}[1]{\left|#1\right|}
\theoremstyle{general} 
\theoremstyle{general} 
\theoremstyle{general} 
\theoremstyle{general} \newtheorem{proposition}{Proposition}
\theoremstyle{general} 
\theoremstyle{remark} 
\theoremstyle{remark} \newtheorem{remark}{Remark}
\newlength{\mygraphwidth}
\newlength{\mygraphheight}
\begin{document}
	\title{Precoder Design and Power Allocation for Downlink MIMO-NOMA via Simultaneous Triangularization}
	\author{\IEEEauthorblockN{Aravindh Krishnamoorthy\rlap{\textsuperscript{\IEEEauthorrefmark{1}\IEEEauthorrefmark{2}}},\,\,\, Meng Huang\rlap{\textsuperscript{\IEEEauthorrefmark{1}}}, and Robert Schober\rlap{\textsuperscript{\IEEEauthorrefmark{1}}}}\IEEEauthorblockA{\small \IEEEauthorrefmark{1}Friedrich-Alexander-Universit\"{a}t Erlangen-N\"{u}rnberg, \IEEEauthorrefmark{2}Fraunhofer Institute for Integrated Circuits (IIS)}}
	\maketitle
	
	\begin{abstract}
		In this paper, we consider the downlink precoder design for two-user power-domain multiple-input multiple-output (MIMO) non-orthogonal multiple access (NOMA) systems. The proposed precoding scheme is based on simultaneous triangularization and decomposes the MIMO-NOMA channels of the two users into multiple single-input single-output NOMA channels, assuming low-complexity self-interference cancellation at the users. In contrast to the precoding schemes based on simultaneous diagonalization (SD), the proposed scheme avoids inverting the MIMO channels of the users, thereby enhancing the ergodic rate performance. Furthermore, we develop a power allocation algorithm based on the convex-concave procedure, and exploit it to obtain the ergodic achievable rate region of the proposed MIMO-NOMA scheme. Our results illustrate that the proposed scheme outperforms baseline precoding schemes based on SD and orthogonal multiple access for a wide range of user rates and performs close to the dirty paper coding upper bound. The ergodic rate region can further be improved by utilizing a hybrid scheme based on time sharing between the proposed MIMO-NOMA scheme and point-to-point MIMO.
	\end{abstract}
	
	\section{Introduction}
	Multiple-input multiple-output (MIMO) non-orthogonal multiple access (NOMA) is a hybrid technology combining the advantages of the multiple spatial streams facilitated by MIMO with the non-orthogonal reception enabled by NOMA \cite{Saito2013}. MIMO-NOMA is one of the promising candidate technologies for the fifth generation (5G) and beyond communication systems. Furthermore, power-domain NOMA, which utilizes superposition coding at the transmitter and successive interference cancellation (SIC) based decoding at the receiver \cite{Saito2013}, is of interest owing to its simplicity and compatibility with the existing 4G networks.
	
	Several works have shown that power-domain MIMO-NOMA enables significantly higher data rates compared to MIMO orthogonal multiple access (OMA) \cite{Liu2016,Zeng2017}. However, a careful MIMO precoder design was found to be crucial for realizing the potential performance gains of MIMO-NOMA \cite{Ding2016}. To this end, multiple precoding schemes for MIMO-NOMA have been reported in literature \cite{Ding2016,Ali2017,Zeng2017a,Chen2016,Chen2017a,Ding2016b,Choi2016}. Furthermore, power allocation for MIMO-NOMA systems was investigated in \cite{Wang2019,Xiao2019}. Moreover, in order to reduce the decoding complexity at the users, precoder designs that simultaneously diagonalize the users' MIMO channels were reported in \cite{Chen2019} and \cite{Krishnamoorthy2019} and a simultaneously triangularizing uplink precoder design was proposed in \cite{Krishnamoorthy2019a}. The optimal power allocation for the precoder in \cite{Chen2019} was studied in \cite{Hanif2019}.
	
	The simultaneous diagonalization (SD) based precoding schemes in \cite{Chen2019} and \cite{Krishnamoorthy2019} enable low-complexity decoding at the users by decomposing the MIMO-NOMA channels of the users into multiple parallel single-input single-output (SISO)-NOMA channels. Furthermore, both schemes exploit the available null spaces of the MIMO channels of the users for enabling inter-user-interference free communication, thereby enhancing the ergodic rate performance. However, both precoding schemes achieve SD by inverting the MIMO channels of the users, which limits their performance.
	
	In this paper, we investigate the precoder design for a two-user power-domain MIMO-NOMA downlink system via simultaneous triangularization (ST). ST can overcome the limitations of SD as it avoids channel inversion. Nevertheless, ST can also decompose the MIMO-NOMA channels into SISO-NOMA channels, assuming low-complexity self-interference cancellation at the users, and takes advantage of the null spaces of the MIMO channels of the users for enabling inter-user-interference free communication.
	
	The main contributions of this paper are as follows.
	\begin{itemize}
		\item Exploiting the QR decomposition, we develop an ST MIMO-NOMA precoding scheme and a low-complexity decoding scheme, which decompose the downlink MIMO-NOMA channel into multiple parallel SISO-NOMA channels, assuming self-interference cancellation at the users.
		\item We develop a power allocation algorithm based on the convex-concave procedure (CCP) \cite{Yuille2003}, \cite{Lipp2016} and exploit it to obtain a lower bound on the ergodic achievable rate region of the proposed MIMO-NOMA scheme.
		\item We study the convergence of the proposed power allocation algorithm and the ergodic achievable rate region of the proposed MIMO-NOMA scheme based on computer simulations. Furthermore, we compare the ergodic achievable rate region of the proposed scheme with those of dirty paper coding (DPC) \cite{Vishwanath2003}, SD precoding \cite{Chen2019}, \cite{Krishnamoorthy2019}, and OMA.
	\end{itemize}

	The remainder of this paper is organized as follows. We describe the system model in Section \ref{sec:systemmodel}. In Section \ref{sec:proposed}, we present the proposed ST MIMO-NOMA precoding and decoding schemes. The proposed power allocation algorithm is provided in Section \ref{sec:poweralloc}, and simulation results are presented in Section \ref{sec:sim}. The paper is concluded in Section \ref{sec:con}.
	
	\emph{Notation:} Boldface capital letters $\boldsymbol{X}$ and lower case letters $\boldsymbol{x}$ denote matrices and vectors, respectively. $\boldsymbol{X}^\mathrm{T}$ and $\boldsymbol{X}^\mathrm{H}$ denote the transpose and Hermitian transpose of matrix $\boldsymbol{X}$, respectively. $\mathbb{C}^{m\times n}$ and $\mathbb{R}^{m\times n}$ denote the sets of all $m\times n$ matrices with complex-valued and real-valued entries, respectively. The $(i,j)$-th entry of matrix $\boldsymbol{X}$ is denoted by $[\boldsymbol{X}]_{ij}$ and the $i$-th entry of vector $\boldsymbol{x}$ is denoted by $[\boldsymbol{x}]_i.$ $\boldsymbol{I}_n$ denotes the $n\times n$ identity matrix, and $\boldsymbol{0}$ denotes the all zero matrix of appropriate dimension. The circularly symmetric complex Gaussian (CSCG) distribution with mean vector $\boldsymbol{\mu}$ and covariance matrix $\boldsymbol{\Sigma}$ is denoted by $\mathcal{CN}(\boldsymbol{\mu},\boldsymbol{\Sigma})$; $\sim$ stands for ``distributed as.''
	\section{System Model}
	\label{sec:systemmodel}
	We consider a two-user\footnote{We restrict the number of paired users to two for problem tractability and to limit the overall decoding complexity at the receivers as pairing $K > 1$ users necessitates $K(K-1)/2$ successive interference cancellation stages. For $K>2$ users, a hybrid approach, such as in \cite[Section V-B]{Chen2019}, can be employed where users are divided into groups of two users and each group is allocated orthogonal resources. Within each two-user group, the proposed MIMO-NOMA scheme can be applied.} downlink transmission, where the BS is equipped with $N$ antennas, and the users employ $M_1$ and $M_2$ antennas, respectively. Furthermore, we assume that the first user is located farther away from the BS and experiences a higher path loss compared to the second user.

	The MIMO channel matrix between the BS and user $k$ is modeled as
	\begin{equation}
	\frac{1}{\sqrt{\mathstrut \Pi_k}} \boldsymbol{H}_k, \label{eqn:plm}
	\end{equation}
	where matrix $\boldsymbol{H}_k \in \mathbb{C}^{M_k\times N}$ captures the small-scale fading effects, and its elements $[\boldsymbol{H}_k]_{ij} \sim \mathcal{CN}(0,1), i=1,\dots,M_k, j=1,\dots,N, k=1,2,$ are statistically independent for all $i,j,k.$ The scalar $\Pi_k > 0, k=1,2,$ models the path loss between the BS and user $k,$ where $\Pi_1 > \Pi_2.$ Perfect knowledge of both MIMO channel matrices, $\boldsymbol{H}_1$ and $\boldsymbol{H}_2,$ is assumed at the BS whereas perfect knowledge of their respective MIMO channel matrices is assumed at the users.
	
	Let $L = \mathrm{min}\left\{M_1+M_2, N\right\}$ denote the symbol vector length, and let $\boldsymbol{s}_1 = [s_{1,1},\dots,s_{1,L}]^\mathrm{T} \in \mathbb{C}^{L\times 1}$ and $\boldsymbol{s}_2 = [s_{2,1},\dots,s_{2,L}]^\mathrm{T} \in \mathbb{C}^{L\times 1}$ denote the symbol vectors intended for the first and the second users, respectively. We assume that the $s_{k,l} \sim \mathcal{CN}(0,1), k=1,2, l=1,\dots,L,$ are statistically independent for all $k,l.$ We construct the MIMO-NOMA symbol vector $\boldsymbol{s} = [s_1, \dots, s_L]^\mathrm{T}$ as follows
	\begin{align}
	\boldsymbol{s} = \scalebox{0.85}{\mbox{\ensuremath{\displaystyle \mathrm{diag}\left(\sqrt{p_{1,1}},\dots,\sqrt{p_{1,L}}\right)\boldsymbol{s}_1 + \mathrm{diag}\left(\sqrt{p_{2,1}},\dots,\sqrt{p_{2,L}}\right)\boldsymbol{s}_2}}}, \label{eqn:s}
	\end{align}
	where $p_{k,l} \geq 0, k=1,2, l=1,\dots,L,$ is the transmit power allocated to the $l$-th symbol of user $k.$ The MIMO-NOMA symbol vector is precoded using a linear precoder matrix $\boldsymbol{P} \in \mathbb{C}^{N\times L}$ such that 
	\begin{align}
	\scalebox{0.95}{\mbox{\ensuremath{\displaystyle {\mathrm{tr}\left(\boldsymbol{P} \mathrm{diag}\left(p_{1,1} + p_{2,1}, \dots, p_{1,L} + p_{2,L}\right) \boldsymbol{P}^H\right)} \leq P_T}}}, \label{eqn:eppleq1}
	\end{align}
	where $P_T$ denotes the maximum available transmit power, resulting in transmit signal $\boldsymbol{x} = \boldsymbol{P} \boldsymbol{s}.$ 
	
	At user $k,$ $k=1,2,$ the received signal, $\hat{\boldsymbol{y}}_k \in \mathbb{C}^{M_k\times 1},$ is given by
	\begin{align}
	\hat{\boldsymbol{y}}_k &= \frac{1}{\sqrt{\Pi_k}}\boldsymbol{H}_k \boldsymbol{x} + \hat{\boldsymbol{z}}_k = \frac{1}{\sqrt{\Pi_k}} \boldsymbol{H}_k \boldsymbol{P} \boldsymbol{s} + \hat{\boldsymbol{z}}_k,
	\end{align} 
	where $\hat{\boldsymbol{z}}_k \sim \mathcal{CN}(\boldsymbol{0},\sigma^2 \boldsymbol{I}_{M_k})$ denotes the additive white Gaussian noise (AWGN) vector at user $k.$ Furthermore, at user $k,$ signal $\hat{\boldsymbol{y}}_k$ is processed by a unitary detection matrix $\boldsymbol{Q}_k \in \mathbb{C}^{M_k\times M_k}$ leading to
	\begin{equation}
	\boldsymbol{y}_k = \boldsymbol{Q}_k \hat{\boldsymbol{y}}_k = \frac{1}{\sqrt{\Pi_k}} \boldsymbol{Q}_k\boldsymbol{H}_k \boldsymbol{P} \boldsymbol{s} + \boldsymbol{z}_k, \label{eqn:yk}
	\end{equation}
	where $\boldsymbol{z}_k = \boldsymbol{Q}_k\hat{\boldsymbol{z}}_k \sim \mathcal{CN}(\boldsymbol{0},\sigma^2 \boldsymbol{I}_{M_k}).$ $\boldsymbol{y}_k$ is subsequently used for detection.
	\section{Proposed ST Precoding Scheme}
	\label{sec:proposed}
	In this section, we present the proposed ST precoding scheme and derive expressions for the resulting achievable rates of the users.
	
	\subsection{Simultaneous Triangularization}
	Let $\bar{M}_1 = \mathrm{min}\left\{M_1,L-M_2\right\}, \bar{M}_2 = \mathrm{min}\left\{M_2,L-M_1\right\},$ and $M = N-\bar{M}_1-\bar{M}_2.$ ST of $\boldsymbol{H}_1$ and $\boldsymbol{H}_2$ is compactly stated in the following proposition.
	
	\begin{proposition}
		\label{prop:st}
		Let $\boldsymbol{H}_1$ and $\boldsymbol{H}_2$ be as defined in Section \ref{sec:systemmodel}. Then, there exist unitary matrices $\boldsymbol{Q}_1 \in \mathbb{C}^{M_1\times M_1}, \boldsymbol{Q}_2 \in \mathbb{C}^{M_2\times M_2},$ and a full matrix $\boldsymbol{X} \in \mathbb{C}^{N\times L}$ such that
		\begin{align}
		\boldsymbol{Q}_1\boldsymbol{H}_1\boldsymbol{X} &= \begin{bmatrix}\boldsymbol{R}_1 & \boldsymbol{0}\end{bmatrix}, \label{eqn:st1}\\
		\boldsymbol{Q}_2\boldsymbol{H}_2\boldsymbol{X} &= \begin{bmatrix}\boldsymbol{R}_2' & \boldsymbol{0} & \boldsymbol{R}_2''\end{bmatrix}, \label{eqn:st2}
		\end{align}
		where $\boldsymbol{R}_1 \in \mathbb{C}^{M_1\times (M+\bar{M}_1)}, \boldsymbol{R}_2' \in \mathbb{C}^{M_2\times M},$ and $\boldsymbol{R}_2'' \in \mathbb{C}^{M_2\times \bar{M}_2}.$ Furthermore, $\boldsymbol{R}_1$ and $\boldsymbol{R}_2 = \begin{bmatrix}\boldsymbol{R}_2' & \boldsymbol{R}_2''\end{bmatrix} \in \mathbb{C}^{M_2\times (M+\bar{M}_2)}$ are upper-triangular matrices with real-valued entries on their main diagonals.
	\end{proposition}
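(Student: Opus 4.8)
The plan is to build $\boldsymbol{X}$, $\boldsymbol{Q}_1$, and $\boldsymbol{Q}_2$ explicitly, using the null spaces of $\boldsymbol{H}_1$ and $\boldsymbol{H}_2$ to create the zero blocks and two independent QR decompositions to create the triangular blocks. The key observation is that a column of $\boldsymbol{X}$ lying in $\mathrm{null}(\boldsymbol{H}_1)$ becomes a zero column of $\boldsymbol{H}_1\boldsymbol{X}$, and likewise for $\boldsymbol{H}_2$. Hence I would first assign the columns of $\boldsymbol{X}$ three roles, ordered exactly as required by \eqref{eqn:st1} and \eqref{eqn:st2}: the leading $M$ common columns seen by both users, the user-$2$-annihilated columns drawn from $\mathrm{null}(\boldsymbol{H}_2)$, which produce the interior zero block of \eqref{eqn:st2}, and the user-$1$-annihilated columns drawn from $\mathrm{null}(\boldsymbol{H}_1)$, which produce the trailing zero block of \eqref{eqn:st1}. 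Since $\boldsymbol{Q}_1$ and $\boldsymbol{Q}_2$ act on different receivers, the only object the two users share is $\boldsymbol{X}$, so once $\boldsymbol{X}$ is fixed the two triangularizations can be carried out separately and cannot conflict.

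Concretely, after fixing $\boldsymbol{X}$ I would delete the annihilated columns and form the active submatrices of $\boldsymbol{H}_1\boldsymbol{X}$ (with $M+\bar{M}_1$ nonzero columns) and $\boldsymbol{H}_2\boldsymbol{X}$ (with $M+\bar{M}_2$ nonzero columns). Taking the full QR decomposition of each active submatrix yields unitary factors whose conjugate transposes I set as $\boldsymbol{Q}_1$ and $\boldsymbol{Q}_2$; the corresponding triangular factors are $\boldsymbol{R}_1$ and $\boldsymbol{R}_2 = [\boldsymbol{R}_2' \;\; \boldsymbol{R}_2'']$. Reinserting the previously deleted zero columns of $\boldsymbol{H}_2\boldsymbol{X}$ between $\boldsymbol{R}_2'$ and $\boldsymbol{R}_2''$ recovers the exact form of \eqref{eqn:st2}, and appending the trailing zeros recovers \eqref{eqn:st1}. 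A real-valued, nonnegative main diagonal is then obtained by the standard phase normalization, i.e., post-multiplying each unitary factor by a diagonal matrix of unit-modulus entries that cancels the phases of the pivots.

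The remaining work is dimension bookkeeping, and I expect the main obstacle to lie there together with establishing that $\boldsymbol{X}$ can be chosen with full column rank. Using $L=\min\{M_1+M_2,N\}$ and the definitions of $\bar{M}_1$, $\bar{M}_2$, and $M$, one verifies that the number of columns requested from each null space is nonnegative and never exceeds its dimension (this is precisely where $L\le N$ is used) and that the active submatrices have at least $M_1$ and $M_2$ columns, respectively, so that their QR factors are genuine upper-triangular (trapezoidal) matrices of the stated sizes $M_1\times(M+\bar{M}_1)$ and $M_2\times(M+\bar{M}_2)$. The delicate point is full rank: the columns drawn from $\mathrm{null}(\boldsymbol{H}_1)$ and $\mathrm{null}(\boldsymbol{H}_2)$ must be linearly independent of one another and of the $M$ common columns. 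This is where the channel statistics enter; for $\boldsymbol{H}_1, \boldsymbol{H}_2$ with independent $\mathcal{CN}(0,1)$ entries the two null spaces are almost surely in general position, so the prescribed bases are jointly independent and can be completed by $M$ further columns to a full-rank $\boldsymbol{X}$.
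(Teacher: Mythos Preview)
Your proposal is correct and follows essentially the same construction as the paper: build $\boldsymbol{X}=[\boldsymbol{K}\;\bar{\boldsymbol{H}}_2\;\bar{\boldsymbol{H}}_1]$ with the last two blocks drawn from $\mathrm{null}(\boldsymbol{H}_2)$ and $\mathrm{null}(\boldsymbol{H}_1)$, QR-factorize the surviving nonzero column blocks of $\boldsymbol{H}_1\boldsymbol{X}$ and $\boldsymbol{H}_2\boldsymbol{X}$ separately, and reinsert the annihilated zero columns. The only refinement in the paper is that the $M$ common columns are chosen specifically as an orthonormal basis $\boldsymbol{K}$ for the orthogonal complement of $\mathrm{span}[\bar{\boldsymbol{H}}_1\;\bar{\boldsymbol{H}}_2]$, which immediately gives $\boldsymbol{X}$ full column rank and unit-norm columns without appealing to the genericity argument you sketch at the end.
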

	\begin{proof}
		Let  $\bar{\boldsymbol{H}}_1 \in \mathbb{C}^{N\times \bar{M}_1}$ and $\bar{\boldsymbol{H}}_2 \in \mathbb{C}^{N\times \bar{M}_2}$ be matrices that contain a basis for the null space of $\boldsymbol{H}_1$ and $\boldsymbol{H}_2,$ respectively. Let $\boldsymbol{K} \in \mathbb{C}^{N\times M}$ denote the matrix containing a basis for the null space of $\begin{bmatrix} \bar{\boldsymbol{H}}_1^\mathrm{H} & \bar{\boldsymbol{H}}_2^\mathrm{H}\end{bmatrix}.$ When the null spaces of $\boldsymbol{H}_1$ and $\boldsymbol{H}_2$ are trivial, i.e., when $M_1, M_2 \geq N,$ then $\boldsymbol{K} = \boldsymbol{I}_N.$ Let, by QR decomposition,
		\begin{align}
			\boldsymbol{\mathcal{Q}}_1 \boldsymbol{R}_1 &= \boldsymbol{H}_1 \begin{bmatrix}\boldsymbol{K} & \bar{\boldsymbol{H}}_2\end{bmatrix}, \label{eqn:qr1}\\
			\boldsymbol{\mathcal{Q}}_2 \boldsymbol{R}_2 &= \boldsymbol{H}_2 \begin{bmatrix}\boldsymbol{K} & \bar{\boldsymbol{H}}_1\end{bmatrix}. \label{eqn:qr2}
		\end{align}
		 Then, (\ref{eqn:st1}) and (\ref{eqn:st2}) are satisfied by setting
		\begin{align}
			\boldsymbol{X} = \begin{bmatrix}\boldsymbol{K} & \bar{\boldsymbol{H}}_2 & \bar{\boldsymbol{H}}_1\end{bmatrix}, \label{eqn:x}
		\end{align}
		 and choosing $\boldsymbol{Q}_1 = \boldsymbol{\mathcal{Q}}_1^\mathrm{H}$ and $\boldsymbol{Q}_2 = \boldsymbol{\mathcal{Q}}_2^\mathrm{H}$ from (\ref{eqn:qr1}) and (\ref{eqn:qr2}) above, to obtain
		\begin{align}
		\scalebox{0.85}{\mbox{\ensuremath{\displaystyle \boldsymbol{Q}_1\boldsymbol{H}_1\boldsymbol{X}}}} &= \scalebox{0.85}{\mbox{\ensuremath{\displaystyle \begin{bmatrix}\underbrace{\boldsymbol{\mathcal{Q}}_1^\mathrm{H} \boldsymbol{H}_1 \begin{bmatrix}\boldsymbol{K} & \bar{\boldsymbol{H}}_2\end{bmatrix}}_{\boldsymbol{R}_1} &  \underbrace{\boldsymbol{\mathcal{Q}}_1^\mathrm{H} \boldsymbol{H}_1 \bar{\boldsymbol{H}}_1}_{\boldsymbol{0}}\end{bmatrix},}}} \label{eqn:q1h1x} \\ 
			\scalebox{0.85}{\mbox{\ensuremath{\displaystyle \boldsymbol{Q}_2\boldsymbol{H}_2\boldsymbol{X}}}} &\scalebox{0.85}{\mbox{\ensuremath{\displaystyle \overset{(a)}{=} \begin{bmatrix} \underbrace{\boldsymbol{\mathcal{Q}}_2^\mathrm{H} \boldsymbol{H}_2 \boldsymbol{K}}_{\boldsymbol{R}_2'} & \underbrace{\boldsymbol{\mathcal{Q}}_2^\mathrm{H} \boldsymbol{H}_2 \bar{\boldsymbol{H}}_2}_{\boldsymbol{0}} &  \underbrace{\boldsymbol{\mathcal{Q}}_2^\mathrm{H} \boldsymbol{H}_2 \bar{\boldsymbol{H}}_1}_{\boldsymbol{R}_2''}\end{bmatrix},}}}\label{eqn:q2h2x}
		\end{align}
		where (a) holds because the QR decomposition in (\ref{eqn:qr2}) is unaffected by the zero columns introduced in the middle.
	\end{proof}	
	
	\subsection{Proposed Precoding Scheme}
	\label{sec:prec}
	Based on Proposition \ref{prop:st}, the precoder matrix can be chosen as $\boldsymbol{P} = \boldsymbol{X},$ and the detection matrices of users 1 and 2 can be chosen directly as $\boldsymbol{Q}_1$ and $\boldsymbol{Q}_2$ for users 1 and 2, respectively\footnote{Note that although the proposed scheme utilizes QR decomposition based detection matrices, other reception schemes such as zero forcing can also be utilized for the proposed precoder in (\ref{eqn:x}).}. Hence, the received signal at the users, based on (\ref{eqn:yk}), can be simplified to
	\begin{align}
		\tilde{\boldsymbol{y}}_k &= \frac{1}{\sqrt{\Pi_k}} \boldsymbol{R}_k \tilde{\boldsymbol{s}}_k + \tilde{\boldsymbol{z}}_k \label{eqn:ykr}
	\end{align}
	where, based on Proposition \ref{prop:st}, symbol vectors $\tilde{\boldsymbol{s}}_k \in \mathbb{C}^{(M+\bar{M}_k)\times 1}$ are defined as $[\tilde{\boldsymbol{s}}_k]_l = s_l,$ $l=1,\dots,M,$ $[\tilde{\boldsymbol{s}}_1]_{l+M} = s_{l+M},$ $l=1,\dots,\bar{M}_1,$ and $[\tilde{\boldsymbol{s}}_2]_{l+M} = s_{l+M+\bar{M}_1},$ $l=1,\dots,\bar{M}_2,$ and $\tilde{\boldsymbol{z}}_k$ contains the corresponding elements of $\boldsymbol{z}_k.$
	
	As symbols $s_{l}, l=M+1,\dots,M+\bar{M}_1,$ are only transmitted to user 1, $p_{2,l} = 0$ for $l=M+1,\dots,M+\bar{M}_1.$ Similarly, $p_{1,l} = 0$ for $l=M+\bar{M}_1+1,\dots,L.$ Power allocation coefficients $p_{k,l}, k=1,2, l=1,\dots,M,$ can be used to adjust the rates of users 1 and 2. Lastly, the $M_k\times(M+\bar{M}_k)$ matrices $\boldsymbol{R}_k, k=1,2,$ are of the form		
	\begin{equation}
	\scalebox{0.9}{\mbox{\ensuremath{\displaystyle \begin{bmatrix}
		\rho^{(k)}_{1,1} &  \dots   & \rho^{(k)}_{1,M+\bar{M}_k}   \\
		0		 & \ddots &  \vdots                       \\
		\vdots        &   0      & \rho^{(k)}_{M+\bar{M}_k,M+\bar{M}_k} \\
		\boldsymbol{0}    & \boldsymbol{0} &  \boldsymbol{0}\\
	\end{bmatrix}}}}. \label{eqn:Rk}
	\end{equation}

	\subsection{Decoding Scheme}
	\label{sec:dec}
	Let $\hat{s}_{k,l},$ $k=1,2,l=1,\dots,L,$ denote the detected symbols corresponding to transmitted symbols $s_{k,l}.$ As the rates of $s_{k,l},$ $k=1,2,l=1,\dots,L,$ are chosen such that they lie within the achievable rate region, perfect decoding, i.e., $\hat{s}_{k,l} = s_{k,l},$ is assumed in the following.
	
	From the upper triangular structure of $\boldsymbol{R}_k$ in (\ref{eqn:Rk}), and based on (\ref{eqn:st1}) and (\ref{eqn:st2}), we note that the symbols corresponding to the last $\bar{M}_k$ columns of $\boldsymbol{R}_k, k=1,2,$ which are only transmitted to user $k,$ contain no inter-user-interference and are therefore decoded directly, in reverse order. Next, the symbols corresponding to the first $M$ columns of $\boldsymbol{R}_k, k=1,2,$ which are transmitted to both users, are decoded as in SISO-NOMA \cite{Saito2013}, also in reverse order. For each symbol, the self-interference of the previously decoded symbols is eliminated. The decoding process is described in detail below.
	
	The first user decodes the symbols as follows.
	\begin{enumerate}
		\item If $\bar{M}_1 > 0,$ symbols $s_{1,l},$ $l=M+1,\dots,M+\bar{M}_1,$ are decoded, in reverse order, starting from the self-interference free element $[\tilde{\boldsymbol{y}}_1]_{M+\bar{M}_1}$ given in (\ref{eqn:ykr}). For each subsequent symbol, the self-interference from the previously decoded symbols is eliminated, resulting in the self-interference free signal
		\begin{align}
			\scalebox{0.75}{\mbox{\ensuremath{\displaystyle [\hat{\boldsymbol{y}}_1]_{l} = [\tilde{\boldsymbol{y}}_1]_{l} - \sum_{l'=l+1}^{M+\bar{M}_1} \sqrt{\frac{p_{1,l'}}{\Pi_1}}\rho^{(1)}_{l,l'}\hat{s}_{1,l'} = \sqrt{\frac{p_{1,l}}{\Pi_1}}\rho^{(1)}_{l,l} s_{1,l} + [\tilde{\boldsymbol{z}}_k]_l,}}} \label{eqn:dec12}
		\end{align}
		for $l=M+1,\dots,M+\bar{M}_1$ (in reverse order), which is decoded.
		
		\item Next, if $M > 0,$ symbols $s_{1,l},$ $l=1,\dots,M,$ which contain inter-user-interference from $s_{2,l},l=1,\dots,M,$ are decoded directly as in SISO-NOMA, in reverse order. Then, as above, for each symbol, the self-interference from the previously decoded symbols is eliminated, resulting in the self-interference free signal
		\begin{align}
			\scalebox{0.825}{\mbox{\ensuremath{\displaystyle [\hat{\boldsymbol{y}}_1]_{l}}}} &= \scalebox{0.825}{\mbox{\ensuremath{\displaystyle [\tilde{\boldsymbol{y}}_1]_{l} - \sum_{l'=l+1}^{M} \sqrt{\frac{p_{1,l'}}{\Pi_1}}\rho^{(1)}_{l,l'}\hat{s}_{1,l'} - \sum_{l'=M+1}^{M+\bar{M}_1} \sqrt{\frac{p_{1,l'}}{\Pi_1}}\rho^{(1)}_{l,l'}\hat{s}_{1,l'}}}} \nonumber\\
			&= \scalebox{0.9}{\mbox{\ensuremath{\displaystyle \sqrt{\frac{p_{1,l}}{\Pi_1}}\rho^{(1)}_{l,l} s_{1,l} + \sum_{l'=l}^{M} \sqrt{\frac{p_{2,l'}}{\Pi_1}}\rho^{(1)}_{l,l'}s_{2,l'} + [\tilde{\boldsymbol{z}}_k]_l}}}, \label{eqn:dec11}
		\end{align}
		for $l=1,\dots,M$ (in reverse order), which is decoded. Note that the residual inter-user-interference from symbols $s_{2,l},l=1,\dots,M,$ cannot be eliminated and is treated as noise.
		
	\end{enumerate}
	Analogously, at the second user decodes the symbols as follows.
		\begin{enumerate}
		\item If $\bar{M}_2 > 0,$ symbols $s_{2,l},$ $l=M+\bar{M}_1+1,\dots,M+\bar{M}_1+\bar{M}_2,$ are decoded, in reverse order, starting from the self-interference free element $[\tilde{\boldsymbol{y}}_2]_{M+\bar{M}_2}$ given in (\ref{eqn:ykr}). For each symbol, the self-interference from the previously decoded symbols is eliminated, resulting in the self-interference free signal
		\begin{align}
			\scalebox{0.85}{\mbox{\ensuremath{\displaystyle [\hat{\boldsymbol{y}}_2]_{l-\bar{M}_1}}}} &= \scalebox{0.85}{\mbox{\ensuremath{\displaystyle [\tilde{\boldsymbol{y}}_2]_{l-\bar{M}_1} - \sum_{l'=l+1}^{M+\bar{M}_1+\bar{M}_2} \sqrt{\frac{p_{2,l'}}{\Pi_2}}\rho^{(2)}_{l-\bar{M}_1,l'-\bar{M}_1}\hat{s}_{2,l'}}}} \nonumber\\
			&= \scalebox{0.85}{\mbox{\ensuremath{\displaystyle \sqrt{\frac{p_{2,l}}{\Pi_2}}\rho^{(2)}_{l-\bar{M}_1,l-\bar{M}_1} s_{2,l} + [\tilde{\boldsymbol{z}}_k]_{l-\bar{M}_1},}}} \label{eqn:dec22}
		\end{align}
		for $l=M+\bar{M}_1+1,\dots,M+\bar{M}_1+\bar{M}_2$ (in reverse order), which is decoded.
		
		\item Next, if $M > 0,$ symbols $s_{k,l},$ $k=1,2,l=1,\dots,M,$ are decoded as in SISO-NOMA \cite{Saito2013} where the first user's symbols are decoded directly and the second user's symbols are decoded after SIC, in reverse order. In this case also, for each symbol, the self-interference from the previously decoded symbols is eliminated, resulting in the self-interference free signal
		\begin{align}
			\scalebox{0.75}{\mbox{\ensuremath{\displaystyle [\hat{\boldsymbol{y}}_2]_{l}}}} &= \scalebox{0.75}{\mbox{\ensuremath{\displaystyle [\tilde{\boldsymbol{y}}_2]_{l} - \sum_{l'=l+1}^{M} \frac{1}{\sqrt{\Pi_1}}\rho^{(2)}_{l,l'}(\sqrt{p_{1,l'}}\hat{s}_{1,l'}+\sqrt{p_{2,l'}}\hat{s}_{2,l'})}}} \nonumber\\
				&\qquad\scalebox{0.75}{\mbox{\ensuremath{\displaystyle {}- \sum_{l'=M+\bar{M}_1+1}^{M+\bar{M}_1+\bar{M}_2} \sqrt{\frac{p_{2,l'}}{\Pi_2}}\rho^{(2)}_{l-\bar{M}_1,l'-\bar{M}_1}\hat{s}_{2,l'}}}} \nonumber\\
		&= \scalebox{0.8}{\mbox{\ensuremath{\displaystyle \rho^{(2)}_{l,l} \left(\sqrt{\frac{p_{1,l}}{\Pi_2}}s_{1,l} + \sqrt{\frac{p_{2,l}}{\Pi_2}}s_{2,l}\right) + [\tilde{\boldsymbol{z}}_k]_l,}}} \label{eqn:dec21}
		\end{align}
		for $l=1,\dots,M$ (in reverse order), which is decoded as in SISO-NOMA.
	\end{enumerate}

	As seen from the decoding scheme above, the MIMO-NOMA channels of the users are decomposed to scalar channels. Furthermore, the second user is always the SIC user, i.e., no SIC capability is necessary at the first user.

	\begin{remark}
	From (\ref{eqn:dec12}) and (\ref{eqn:dec22}), we observe that the proposed MIMO-NOMA precoding scheme exploits the available null spaces of the MIMO channel matrices of the users for transmitting $\bar{M}_1$ and $\bar{M}_2$ symbols inter-user-interference free to users 1 and 2, respectively.
	\end{remark}

	\subsection{Computational Complexity}
	For an $m\times n$ matrix $\boldsymbol{X}$, a matrix $\bar{\boldsymbol{X}}$ containing a basis for the null space can be computed using the QR decomposition, which entails a complexity of $\mathcal{O}\mkern-\medmuskip\left(2 n m^2\right)$ \cite[Alg. 5.2.5]{Golub2012}. Hence, constructing the precoder matrix in (\ref{eqn:x}) entails a total complexity of $\mathcal{O}\mkern-\medmuskip\left(2 N M_1^2 + 2 N M_2^2 + 2 N (N-M)^2\right)$ at the BS for computing $\bar{\boldsymbol{H}}_1, \bar{\boldsymbol{H}}_2,$ and $\boldsymbol{K}$ via the QR decomposition. Next, at user $k,$ $k=1,2,$ detection matrix $\boldsymbol{Q}_k,$ which is also computed using the QR decomposition, and self-interference cancellation entail complexities of $\mathcal{O}\mkern-\medmuskip\left(2 M_k^3\right)$ and $\mathcal{O}\mkern-\medmuskip\left(M_k^2\right),$ respectively. Therefore, the proposed scheme entails an overall worst case complexity of $\mathcal{O}\mkern-\medmuskip\left(N^3\right),$ which is identical to that of the SD schemes in \cite{Chen2019} and \cite{Krishnamoorthy2019}.

	\subsection{Achievable Rates}
	\label{sec:rates}
	At the first user, based on (\ref{eqn:dec11}), the achievable rate after self-interference cancellation, is given by
	\begin{align}
	\scalebox{0.85}{\mbox{\ensuremath{\displaystyle R_{1,l}^{(1)} = \log_2\left(1 + \frac{\frac{1}{\Pi_1}p_{1,l}\abs{\rho^{(1)}_{l,l}}^2}{\sigma^2 + \frac{1}{\Pi_1}\sum_{l'=l}^{M}p_{2,l'}\abs{\rho^{(1)}_{l,l'}}^2}\right),}}} \label{eqn:r1lm}
	\end{align}
	for $l = 1,\dots,M,$ and, based on (\ref{eqn:dec12}), by
	\begin{align}
	\scalebox{0.85}{\mbox{\ensuremath{\displaystyle R_{1,l}^{(1)} = \log_2\left(1 + \frac{p_{1,l}\abs{\rho^{(1)}_{l,l}}^2}{\Pi_1\sigma^2}\right),}}}
	\end{align}
	for $l = M+1,\dots,M+\bar{M}_1$. Furthermore, as mentioned in Section \ref{sec:prec}, $R_{1,l}^{(1)}=0$ for $l=M+\bar{M}_1+1,\dots,L.$
	
	Similarly, at the second user, the achievable rate for $s_{1,l}, l=1,\dots,M,$ after self-interference cancellation, based on (\ref{eqn:dec21}), is given by
	\begin{align}
		\scalebox{0.85}{\mbox{\ensuremath{\displaystyle R_{1,l}^{(2)} = \log_2\left(1 + \frac{\frac{1}{\Pi_2}p_{1,l}\abs{\rho^{(2)}_{l,l}}^2}{\sigma^2 + \frac{1}{\Pi_2}p_{2,l}\abs{\rho^{(2)}_{l,l}}^2}\right).}}} \label{eqn:r1lm2}
	\end{align}
	Note that the difference to (\ref{eqn:r1lm}) arises because the inter-user-interference from symbols $s_{2,l}, l=1,\dots,M,$ can be eliminated at the second user as both users' symbols are decoded per SISO-NOMA.
	
	In order to ensure that symbols $s_{1,l}, l=1,\dots,M,$ can be decoded at both users, the rates are chosen as the instantaneous minimum rate. Hence,
	\begin{align}
		R_{1,l} = \mathrm{min}\left\{R_{1,l}^{(1)}, R_{1,l}^{(2)}\right\}, \label{eqn:r1l}
	\end{align}
	for $l=1,\dots,M,$ and $R_{1,l} = R_{1,l}^{(1)}$ for $l=M+1,\dots,L.$
	
	Lastly, at the second user, based on (\ref{eqn:dec21}), for $l=1,\dots,M,$ the achievable rate after SIC and self-interference cancellation is given by
	\begin{align}
		\scalebox{0.85}{\mbox{\ensuremath{\displaystyle R_{2,l} = \log_2\left(1 + \frac{p_{2,l}\abs{\rho^{(2)}_{l,l}}^2}{\Pi_2\sigma^2}\right),}}}
	\end{align}
	and, based on (\ref{eqn:dec22}), for $l=M+\bar{M}_1+1,\dots,L,$ the achievable rate after self-interference cancellation is given by
	\begin{align}
		\scalebox{0.85}{\mbox{\ensuremath{\displaystyle R_{2,l} = \log_2\left(1 + \frac{p_{2,l}\abs{\rho^{(2)}_{l-\bar{M}_1,l-\bar{M}_1}}^2}{\Pi_2\sigma^2}\right).}}}
	\end{align}
	Furthermore, as mentioned in Section \ref{sec:prec}, $R_{2,l}=0$ for $l=M+1,\dots,M+\bar{M}_1.$
		
	\section{Power Allocation}
	\label{sec:poweralloc}
	In this section, we present a low-complexity algorithm for power allocation.
		
	We begin by noting that the columns of matrix $\boldsymbol{X}$ in (\ref{eqn:x}) are unit norm, and hence, the condition in (\ref{eqn:eppleq1}) simplifies to
	\begin{align}
		\sum_{l = 1}^{L} (p_{1,l} + p_{2,l}) \leq P_T. \label{eqn:npp}
	\end{align}
	
	Next, the weighted sum rate of the users can be expressed as
	\begin{align}
	R = \sum_{l=1}^{L} (\mu R_{1,l} + (1-\mu) R_{2,l}),
	\label{eqn:r}
	\end{align}
	where $\mu \in [0,1]$ is a fixed weight which can be chosen to adjust the achievable rates of user 1 and 2 during power allocation \cite[Sec. 4]{WangGiannakis2011}. In order to maximize the weighted sum rate (or equivalently to minimize the negative weighted sum rate) we formulate the following optimization problem:
	\begin{align}
		\text{P1: } \quad \min_{p_{k,l} \geq 0\;\forall\;k,l} -R \qquad \text{s.t.} \quad (\ref{eqn:npp}).
	\end{align}
	The rate region of the proposed ST MIMO-NOMA scheme can be obtained by solving P1 for different values of $\mu.$
	
	Optimization problem P1 is non-convex in the optimization variables $p_{k,l}, k=1,2, l=1,\dots,M,$ due to their coupling in (\ref{eqn:r1lm}) and (\ref{eqn:r1lm2}). Hence, computationally efficient convex optimization methods cannot be applied to obtain the global optimum. Instead, in the following, we reformulate problem P1 as a difference of convex (DC) programming problem which can be solved using the efficient CCP \cite{Yuille2003}, \cite{Lipp2016} to obtain a suboptimal solution.
	
	\subsection{Convex-Concave Procedure}
	We note, based on (\ref{eqn:r1lm}) and (\ref{eqn:r1lm2}), that the non-convexity of P1 is due to $-R_{1,l},$ $l=1,\dots,M.$ Hence, first, we rewrite (\ref{eqn:r1lm}) as
	\begin{align}
	\scalebox{0.85}{\mbox{\ensuremath{\displaystyle R_{1,l}^{(1)}}}} &\scalebox{0.85}{\mbox{\ensuremath{\displaystyle = \underbrace{\log_2\left(\sigma^2 + \frac{1}{\Pi_1}\sum_{l'=l}^{M}p_{2,l'}\abs{\rho^{(1)}_{l,l'}}^2 + \frac{1}{\Pi_1}p_{1,l}\abs{\rho^{(1)}_{l,l}}^2\right)}_{R_{1,l}^{(1,1)}}}}} \nonumber\\
	&\qquad\scalebox{0.85}{\mbox{\ensuremath{\displaystyle {} - \underbrace{\log_2\left(\sigma^2 + \frac{1}{\Pi_1}\sum_{l'=l}^{M}p_{2,l'}\abs{\rho^{(1)}_{l,l'}}^2\right)}_{R_{1,l}^{(1,2)}}}}},
	\end{align}
	for $l=1,\dots,M,$ where $R_{1,l}^{(1,1)}$ and $R_{1,l}^{(1,2)}$ are both jointly concave in the optimization variables $p_{k,l}, k=1,2, l=1,\dots,M.$ Analogously, (\ref{eqn:r1lm2}) can be rewritten as $R_{1,l}^{(2)} =  R_{1,l}^{(2,1)} - R_{1,l}^{(2,2)},$ where, $R_{1,l}^{(2,1)}$ and $R_{1,l}^{(2,2)}$ are similarly defined as $R_{1,l}^{(1,1)}$ and $R_{1,l}^{(1,2)}.$ Hence, (\ref{eqn:r1l}) can be simplified to
	\begin{align}
		\scalebox{0.825}{\mbox{\ensuremath{\displaystyle R_{1,l}}}} &= \scalebox{0.825}{\mbox{\ensuremath{\displaystyle \mathrm{min}\left\{R_{1,l}^{(1,1)} {-} R_{1,l}^{(1,2)}, R_{1,l}^{(2,1)} {-} R_{1,l}^{(2,2)}\right\}}}} \nonumber\\
		&\overset{(a)}{=} \scalebox{0.825}{\mbox{\ensuremath{\displaystyle \mathrm{min}\left\{R_{1,l}^{(1,1)} {+} R_{1,l}^{(2,2)}, R_{1,l}^{(2,1)} {+} R_{1,l}^{(1,2)}\right\} - (R_{1,l}^{(1,2)} {+} R_{1,l}^{(2,2)})}}}, \label{eqn:dc}
	\end{align}
	where (a) is obtained using the relation
	\begin{align}
		\mathrm{min}\left\{A{-}B,C{-}D\right\} = \mathrm{min}\left\{A{+}D,C{+}B\right\} - (B{+}D).
	\end{align}
	As the minimum of two concave functions is concave and the sum of two concave functions is concave \cite{Boyd2004}, $-R_{1,l}$ in (\ref{eqn:dc}) is in DC form. Hence, P1 is a DC optimization problem.
	
	Next, in accordance with CCP, we construct a linear underestimator for $-R_{1,l},l=1,\dots,M,$ based on a first-order approximation, around a feasible point $\boldsymbol{q} = \{q_1, \dots, q_M\}$ as
	\begin{align}
		\scalebox{0.825}{\mbox{\ensuremath{\displaystyle \tilde{R}_{1,l}(\boldsymbol{q})}}} &= \scalebox{0.825}{\mbox{\ensuremath{\displaystyle \mathrm{min}\left\{R_{1,l}^{(1,1)} + R_{1,l}^{(2,2)}, R_{1,l}^{(2,1)} + R_{1,l}^{(1,2)}\right\} + \left.\left(R_{1,l}^{(1,2)} + R_{1,l}^{(2,2)}\right)\right|_{\boldsymbol{q}}}}} \nonumber\\
		&\qquad\scalebox{0.825}{\mbox{\ensuremath{\displaystyle -\underbrace{\left.\frac{\partial}{\partial p_{2,l}} \left(R_{1,l}^{(1,2)} + R_{1,l}^{(2,2)}\right)\right|_{\boldsymbol{q}}\left(p_{2,l}-q_{l}\right)}_{f_l(p_{2,l},\boldsymbol{q})}}}}, \label{eqn:tr1l}
	\end{align}
	where $f_l(p_{2,l},\boldsymbol{q})$ is given in (\ref{eqn:fl}) on top of the next page.
	\begin{figure*}
	\begin{align}
		\scalebox{0.85}{\mbox{\ensuremath{\displaystyle f_l(p_{2,l},\boldsymbol{q}) = \left[\frac{\frac{1}{\Pi_1}\abs{\rho^{(1)}_{l,l}}^2}{\log{(2)}\left(\sigma^2 + \frac{1}{\Pi_1}\sum_{l'=l}^{M}q_{l'}\abs{\rho^{(1)}_{l,l'}}^2\right)} + \frac{\frac{1}{\Pi_2}\abs{\rho^{(2)}_{l,l}}^2}{\log{(2)}\left(\sigma^2 + \frac{1}{\Pi_2} q_{l}\abs{\rho^{(2)}_{l,l}}^2\right)}\right](p_{2,l}-q_l)}}}\label{eqn:fl}
	\end{align}
	\hrulefill
	\vspace{-0.5cm}
	\end{figure*}
	Lastly, the approximate weighted sum rate, based on (\ref{eqn:tr1l}), is obtained by substituting $\tilde{R}_{1,l}$ for $R_{1,l},l=1,\dots,M,$ in (\ref{eqn:r}) as
	\begin{align}
		\tilde{R}(\boldsymbol{q}) = \scalebox{0.9}{\mbox{\ensuremath{\displaystyle \sum_{l=1}^{M} \mu \tilde{R}_{1,l}(\boldsymbol{q}) + \sum_{l=M+1}^{L} \mu R_{1,l} + \sum_{l=1}^{L} (1-\mu) R_{2,l}.}}} \label{eqn:tr}
	\end{align}
	Exploiting (\ref{eqn:tr}), the weighted sum rate optimization problem can be approximated as
	\begin{align}
	\text{P2$(\boldsymbol{q})$: } \quad \min_{p_{k,l} \geq 0\;\forall\;k,l} -\tilde{R}(\boldsymbol{q}) \qquad \text{s.t.} \quad (\ref{eqn:npp}),
	\end{align}
	which is a convex optimization problem. A suboptimal solution to P1 is found by solving P2$(\boldsymbol{q})$ repeatedly using CCP as described in the following.
	
	Let $\boldsymbol{q}^{(n)}$ denote the value of $\boldsymbol{q}$ in the $n$-th iteration. Starting from $\boldsymbol{q}^{(0)} = \boldsymbol{0},$ in iteration $n=1,2,\dots,$ in accordance with the CCP, the convex problem P2$(\boldsymbol{q}^{(n-1)})$ is solved to obtain the solution $p^{(n)}_{k,l},k=1,2,l=1,\dots,L.$ The obtained solution is used to update $\boldsymbol{q}^{(n)},$ which in the following iteration tightens the linear underestimator in (\ref{eqn:tr1l}). The iterations are continued until convergence (upto a numerical tolerance); the CCP is guaranteed to converge to a stationary point of P1 \cite{Yuille2003}, \cite{Lipp2016}. The proposed algorithm is summarized in Algorithm \ref{alg:pa}.

	\begin{figure}
	\begin{algorithm}[H]
		\begin{algorithmic}[1]
			\STATE {Initialize $\boldsymbol{q}^{(0)} = \boldsymbol{0},$ $p^{(0)}_{k,l} = -\infty,k=1,2,l=1,\dots,L,$ numerical tolerance $\epsilon,$ and iteration index $n=0$}
			\REPEAT
			\STATE{$n \leftarrow n + 1$}
			\STATE {Solve P2($\boldsymbol{q}^{(n-1)}$) and obtain the optimal values $p^{(n)}_{k,l},k=1,2,l=1,\dots,L$ \label{al:p2}}
			\STATE{Update $q^{(n)}_l = p^{(n)}_{2,l}, l=1,\dots,M$ \label{al:q}}
			\UNTIL{$|p^{(n)}_{k,l} - p^{(n-1)}_{k,l}| < \epsilon\;\forall\;k,l$}
			\STATE{Return $p^{(n)}_{k,l},k=1,2,l=1,\dots,L,$ as the solution \label{al:p}}
		\end{algorithmic}
		\caption{Power allocation algorithm for ST MIMO-NOMA.}
		\label{alg:pa}
	\end{algorithm}
	\end{figure}
	\section{Simulation Results}
	\label{sec:sim}
	In this section, we evaluate the performance of the proposed ST MIMO-NOMA scheme using computer simulations. We assume that the first and the second user are located at distances of $d_1 = 250~\text{m}$ and $d_2 = 50~\text{m}$ from the BS, respectively. The path loss is modeled as $\Pi_k = d_k^2,$ i.e., $\Pi_1 = 250^2$ and $\Pi_2 = 50^2.$ Furthermore, we assume $P_T = 30 \text{ dBm}$ and $\sigma^2 = -35 \text{ dBm.}$  
	
	\begin{figure}
		\centering
		\includegraphics[width=0.8\columnwidth]{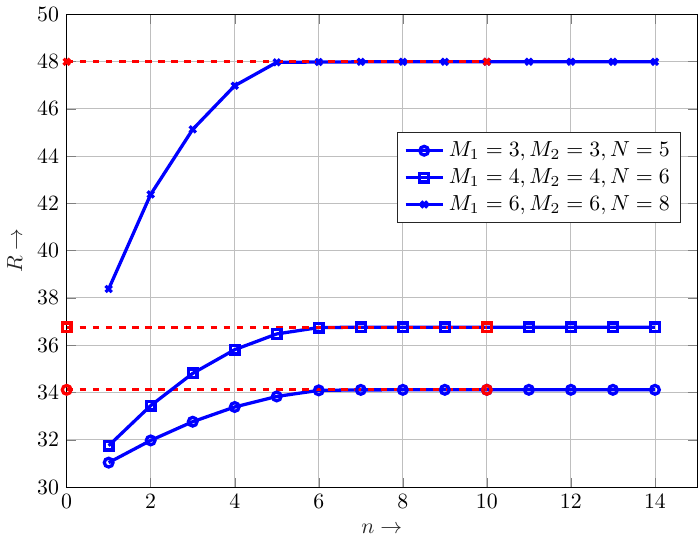}
		\caption{Convergence of Algorithm \ref{alg:pa} for $P_T=30 \text{ dBm.}$}
		\label{fig:conv}
	\end{figure}

	Figure \ref{fig:conv} shows the convergence behavior of Algorithm \ref{alg:pa} for different numbers of transmit and receive antennas and $\mu=0.5.$ From the figure, we observe that the convergence rate of the algorithm is similar for different antenna configurations. In each considered case, the algorithm converges in less than ten iterations. Hence, in the following, we set the number of iterations in Algorithm \ref{alg:pa} to ten.
	
	\begin{figure}
	\centering
	\includegraphics[width=0.8\columnwidth]{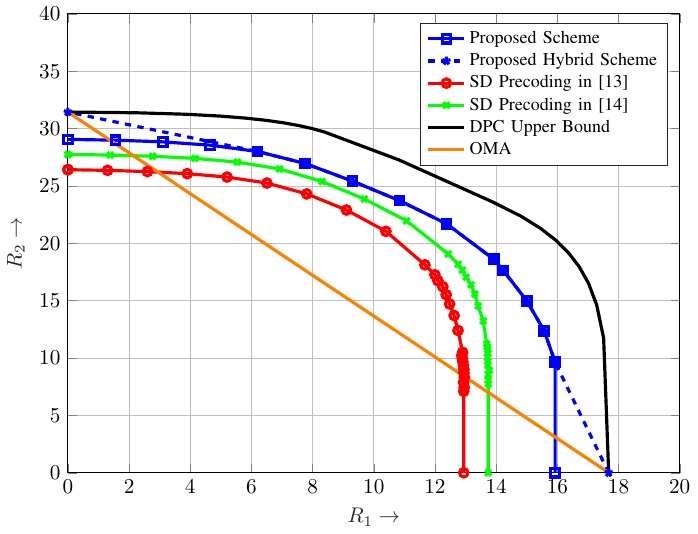}
	\caption{Rate region for $M_1=3, M_2=3, N=5,$ and $P_T = 30 \text{ dBm}.$}
	\label{fig:335}
	\end{figure}	
	
	Figure \ref{fig:335} compares the ergodic rate region of the proposed ST MIMO-NOMA scheme with the ergodic rate regions of DPC, SD precoding in \cite{Chen2019} and \cite{Krishnamoorthy2019}, and OMA. The ergodic rate region of the proposed scheme is computed by varying $\mu$ in (\ref{eqn:r}); the ergodic rate regions of SD precoding are obtained via power allocation analogous to the proposed scheme; the DPC upper bound is obtained by exploiting the broadcast channel (BC)-multiple access channel (MAC) duality, as in \cite{Vishwanath2003}; and the proposed hybrid scheme is obtained by time sharing between the proposed ST MIMO-NOMA scheme and point-to-point MIMO.
	
	From Figure \ref{fig:335}, we observe that the proposed scheme performs close to the DPC upper bound with a gap of about 2 bits per channel use (BPCU). Furthermore, we observe that the proposed scheme outperforms SD precoding and OMA for a wide range of user rates. However, OMA is superior for rate pairs close to point-to-point MIMO. The gap to the DPC upper bound and the performance loss compared to OMA for some user rates are caused by the cancellation of the received signal components corresponding to the off-diagonal elements of the triangularized channel matrix in the proposed scheme. The energy of these signal components cannot be exploited for decoding.
	\section{Conclusion}
	\label{sec:con}
	In this paper, we studied the downlink precoder design for two-user power-domain MIMO-NOMA. First, we presented a method to simultaneously triangularize the MIMO channel matrices of the users using QR decomposition. Next, we utilized ST to design precoding and detection matrices which decompose the MIMO-NOMA channels of the users to SISO-NOMA channels, assuming low-complexity self-interference cancellation at the users. The proposed precoding matrix exploits the null spaces of the MIMO channel matrix of the users, when available, to transmit inter-user-interference free symbols. We also developed a power allocation algorithm based on CCP, which can be used to obtain a lower bound on the ergodic achievable rate region of the proposed ST MIMO-NOMA scheme. The ergodic achievable rate region was compared to those of DPC, SD precoding \cite{Chen2019}, \cite{Krishnamoorthy2019}, and OMA. Computer simulations revealed that the proposed MIMO-NOMA scheme performs close to the DPC upper bound and outperforms SD precoding \cite{Chen2019}, \cite{Krishnamoorthy2019}, and OMA over a wide range of user rates. Further performance improvements were obtained with a hybrid scheme which performs time sharing between the proposed ST MIMO-NOMA scheme and point-to-point MIMO.

	\begin{appendices}
	\renewcommand{\thesection}{\Alph{section}}
	\renewcommand{\thesectiondis}[2]{\Alph{section}:}

	\end{appendices}	
	\bibliographystyle{IEEEtran}
	\bibliography{references}
	
\end{document}